\newcommand{\includefigs}[1]{#1}            
\newcommand{\includefigsAppendix}[1]{}      
\newtheorem{thm}{Theorem}[section]
\newtheorem{theorem}[thm]{Theorem}
\newtheorem{lemma}[thm]{Lemma}
\newtheorem{corollary}[thm]{Corollary}
\newcommand{\remove}[1]{}
\newcommand{\lt}{\left}
\newcommand{\rt}{\right}
\title{Strong Scaling of Matrix Multiplication Algorithms and Memory-Independent Communication Lower Bounds}
\author{
\alignauthor
Grey Ballard\titlenote{Research supported by Microsoft (Award $\#$024263) and Intel (Award $\#$024894) funding and by matching funding by U.C. Discovery (Award $\#$DIG07-10227). Additional support comes from Par Lab affiliates National Instruments, Nokia, NVIDIA, Oracle, and Samsung.}\\
\affaddr{UC Berkeley}\\
\email{ballard@eecs.berkeley.edu}
\alignauthor
James Demmel$^{^{^{\textrm{\normalsize $*$}}}}$\titlenote{Research is also supported by DOE grants DE-{}SC0003959, DE-{}SC0004938, and DE-{}AC02-05CH11231.}\\
\affaddr{UC Berkeley}\\
\email{demmel@cs.berkeley.edu}
\alignauthor
Olga Holtz\titlenote{Research supported by the Sofja Kovalevskaja
programme of Alexander von Humboldt Foundation
and by the National Science Foundation under agreement DMS-0635607,
while visiting the Institute for Advanced Study.}\\
\affaddr{UC Berkeley and TU Berlin}\\
\email{holtz@math.berkeley.edu}
\and
Benjamin Lipshitz$^{^{^{\textrm{\normalsize $*$}}}}$\\
\affaddr{UC Berkeley}\\
\email{lipshitz@berkeley.edu}
\alignauthor
Oded Schwartz\titlenote{
Research supported by U.S. Department of Energy grants under
Grant Numbers DE-{}SC0003959.
}\\
\affaddr{UC Berkeley}\\
\email{odedsc@eecs.berkeley.edu}
}
\begin{document}\sloppy
\maketitle

\begin{abstract}
A parallel algorithm has perfect strong scaling if its running time on $P$ processors is linear in $1/P$, including all communication costs. Distributed-memory parallel algorithms for matrix multiplication with perfect strong scaling have only recently been found. One is based on classical matrix multiplication (Solomonik and Demmel, 2011), and one is based on Strassen's fast matrix multiplication (Ballard, Demmel, Holtz, Lipshitz, and Schwartz, 2012). Both algorithms scale perfectly, but only up to some number of processors where the inter-processor communication no longer scales.

We obtain a memory-independent communication cost lower bound on classical and Strassen-based distributed-memory matrix multiplication algorithms.  These bounds imply that no classical or Strassen-based parallel matrix multiplication algorithm can strongly scale perfectly beyond the ranges already attained by the two parallel algorithms mentioned above. The memory-independent bounds and the strong scaling bounds generalize to other algorithms.
\end{abstract}

\thispagestyle{empty}
\bigskip

{{\bf ACM Classification Keywords}}:
F.2.1

{{\bf ACM General Terms}}: Algorithms, Design, Performance.

{{\bf Keywords}: Communication-avoiding algorithms, Strong scaling, Fast matrix multiplication
}

\clearpage

\section{Introduction}

In evaluating the recently proposed parallel algorithm based on Strassen's matrix multiplication \cite{BallardDemmelHoltzLipshitzSchwartz12a} and comparing the communication costs to the known lower bounds \cite{BallardDemmelHoltzSchwartz11b}, we found a gap between the upper and lower bounds for certain problem sizes.  The main motivation of this work is to close this gap by tightening the lower bound for this case, proving that the algorithm is optimal in all cases, up to $O(\log P)$ factors.  A similar scenario exists in the case of classical matrix multiplication; in this work we provide the analogous tightening of the existing lower bound \cite{IronyToledoTiskin04} to show optimality of another recently proposed algorithm \cite{SolomonikDemmel11}.

In addition to proving optimality of algorithms, the lower bounds in this paper yield another interesting conclusion regarding strong scaling.  We say that an algorithm strongly scales perfectly if it attains running time on \(P\) processors which is linear in \(1/P\), including all communication costs.   While it is possible for classical and Strassen-based matrix multiplication algorithms to strongly scale perfectly, the communication costs restrict the strong scaling ranges much more than do the computation costs.  These ranges depend on the problem size relative to the local memory size, and on the computational complexity of the algorithm.

Interestingly, in both cases the dominance of a memory-independent bound arises, and the strong scaling range ends, exactly when the memory-dependent latency lower bound becomes constant.  This observation may provide a hint as to where to look for strong scaling ranges in other algorithms.  Of course, since the latency cost cannot possibly drop below a constant, it is an immediate result of the memory-dependent bounds that the latency cost cannot continue to strongly scale perfectly.  However the bandwidth cost typically dominates the cost, and it is the memory-independent bandwidth scaling bounds that limit the strong scaling of matrix multiplication in practice.  For simplicity we omit discussions of latency cost, since the number of messages is always a factor of \(M\) below the bandwidth cost in the strong scaling range, and is always constant outside the strong scaling range.

While the main arguments in this work focus on matrix multiplication%
, we present results in such a way that they can be generalized to other algorithms, including other $O(n^3)$-based dense and sparse algorithms as in \cite{BallardDemmelHoltzSchwartz11a} and other fast matrix multiplication algorithms as in \cite{BallardDemmelHoltzSchwartz11b}.

Our paper is organized as follows.  In Section~\ref{sec:Strassen} we prove a memory-independent communication lower bound for Strassen-based matrix multiplication algorithms, and we prove an analogous bound for classical matrix multiplication in Section~\ref{sec:classical}.  We discuss the implications of these bounds on strong scaling in Section~\ref{sec:strongscaling} and compare the communication costs of Strassen and classical matrix multiplication as the number of processors increases.  In Section~\ref{sec:discussion} we discuss generalization of our bounds to other algorithms.  The main results of this paper are summarized in Table~\ref{tbl:summary}.

\section{Communication Lower Bounds}
\label{sec:LB}
We use the distributed-memory communication model (see, \emph{e.g.}, \cite{BallardDemmelHoltzSchwartz11a}), where the bandwidth-cost of an algorithm is proportional to the number of words communicated and the latency-cost is proportional to the number of messages communicated along the critical path.  We will use the notation that \(n\) is the size of the matrices, \(P\) is the number of processors, \(M\) is the local memory size of each processor, and \(\omega_0=\log_2 7\approx 2.81\) is the exponent of Strassen's matrix multiplication.

\subsection{Strassen's Matrix Multiplication}
\label{sec:Strassen}

In this section, we prove a memory-independent lower bound for Strassen's matrix multiplication of $\Omega(n^2/P^{2/\omega_0})$ words, where $\omega_0=\log_2 7$.  We reuse notation and proof techniques from \cite{BallardDemmelHoltzSchwartz11a}.  By prohibiting redundant computations we mean that each arithmetic operation is computed by exactly one processor.  This is necessary for interpreting edge expansion as communication cost. 

\begin{theorem}
\label{thm:Strassen}
Suppose a parallel algorithm performing Strassen's matrix multiplication minimizes computational costs in an asymptotic sense and performs no redundant computation. Then, for sufficiently large $P$,\footnote{The theorem applies to any \(P\geq 2\) with a strict enough assumption on the load balance among vertices in $Dec_{\lg n} C$ as defined in the proof.} some processor must send or receive at least $\Omega\lt(\frac{n^2}{P^{2/w_0}}\rt)$ words.
\end{theorem}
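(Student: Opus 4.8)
The plan is to work with the directed acyclic computation graph of the recursive Strassen algorithm and to reuse the graph-expansion machinery of \cite{BallardDemmelHoltzSchwartz11a}, converting the hypotheses into a vertex partition whose edge boundary is forced to be large by an edge isoperimetric inequality. First I would model the execution as the DAG whose vertices are the arithmetic operations together with the input and output matrix entries, and whose edges record data dependencies. Because the algorithm performs no redundant computation, every non-input vertex is computed by exactly one processor, so the assignment of operations to processors induces a partition of the vertex set into classes $S_1,\dots,S_P$. The first observation is that the number of words processor $p$ must send or receive is, up to the constant bounded degree of the graph, at least the edge boundary $|E(S_p,\overline{S_p})|$: each dependency edge crossing between two classes corresponds to a value produced on one processor and consumed on another, hence to a communicated word.

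Next I would invoke the load-balancing hypothesis. Since computational cost is minimized asymptotically, the $\Theta(n^{\omega_0})$ arithmetic operations are distributed so that some processor $p$ is responsible for $|S_p|=\Theta(n^{\omega_0}/P)$ of them; this is precisely the balance assumption on the vertices of $Dec_{\lg n} C$ referenced in the footnote. For the range of $P$ under consideration this keeps $|S_p|$ below a constant fraction of the total vertex count, placing it in the regime where the isoperimetric estimate applies.

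The heart of the argument is the edge isoperimetric inequality for the Strassen DAG from \cite{BallardDemmelHoltzSchwartz11a}: for every subset $S$ of at most half the vertices, $|E(S,\overline{S})|=\Omega\!\lt(|S|^{2/\omega_0}\rt)$. Applying this to $S_p$ gives $|E(S_p,\overline{S_p})|=\Omega\!\lt((n^{\omega_0}/P)^{2/\omega_0}\rt)=\Omega\!\lt(n^2/P^{2/\omega_0}\rt)$. Finally I would separate genuine communication from boundary edges incident to input or output entries already resident on $p$: under a balanced layout at most $O(n^2/P)$ such edges are free, and since $2/\omega_0<1$ the surface term $n^2/P^{2/\omega_0}$ dominates $n^2/P$ once $P$ is large enough, which accounts for the ``sufficiently large $P$'' qualification and yields the claimed bound.

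I expect the main obstacle to be the edge isoperimetric inequality itself, since establishing good edge expansion of the recursively defined Strassen graph is the delicate step; however, as that estimate is already available from \cite{BallardDemmelHoltzSchwartz11a}, the genuine work here is packaging the load-balance hypothesis correctly and checking that the memory-independent surface term overtakes the locally available data, which is exactly what dictates the restriction on $P$.
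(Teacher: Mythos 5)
Your proposal follows the same expansion-based strategy as the paper (ownership partition, load balance, edge-isoperimetric bound, bounded degree), but two steps as written have genuine gaps. First, you invoke an edge isoperimetric inequality $|E(S,\overline{S})|=\Omega\lt(|S|^{2/\omega_0}\rt)$ for the \emph{full} Strassen DAG and say it is already available from prior work. It is not: the prior work proves an expansion bound only for the decoding subgraph $Dec_{\lg n} C$ (Lemma 10 there gives $h(Dec_kC)=\Omega((4/7)^k)$, which must then be combined with a decomposition into edge-disjoint small subgraphs, Claim 5 there, to get the small-set form $h_s = \Omega(s^{2/\omega_0-1})$ that your statement implicitly uses). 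The paper therefore restricts the whole argument to $Dec_{\lg n} C$, and needs the extra observation that a constant fraction of all flops lie in $Dec_{\lg n} C$, so that asymptotically minimal computation forces some processor to own $\Theta(n^{\omega_0}/P)$ vertices \emph{of that subgraph}. Without this restriction you are resting on an inequality that no cited result establishes.

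Second, because your DAG includes input/output entries as vertices, you must discount boundary edges incident to data already resident on processor $p$, and you do so by assuming a ``balanced layout'' with only $O(n^2/P)$ free edges. That assumption is not among the theorem's hypotheses: the Strassen theorem (unlike the classical Theorem~\ref{thm:classical}) makes no assumption about the data layout or the number of copies of the input, and under full input replication your subtraction $\Omega\lt(n^2/P^{2/\omega_0}\rt) - O(n^2)$ is vacuous. The paper's restriction to $Dec_{\lg n} C$ dissolves this issue entirely: every vertex of $Dec_{\lg n} C$ is a computed value, and by the no-redundant-computation hypothesis each is computed by exactly one processor, so \emph{every} crossing edge corresponds to a word that must be sent or received, with no subtraction and no layout assumption. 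This is also why the only source of the ``sufficiently large $P$'' requirement in the paper is that the owned set be a constant fraction of $Dec_{\lg n} C$ (so the expansion bound is non-vacuous), not the domination of $n^2/P^{2/\omega_0}$ over $n^2/P$ that your extra subtraction forces you to argue. Both gaps are repaired by exactly the paper's move: carry out your argument inside the decoding subgraph rather than the full DAG.
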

\begin{proof}
The computation DAG (see e.g., \cite{BallardDemmelHoltzSchwartz11a} for formal definition) of Strassen's algorithm multiplying square matrices $A\cdot B=C$ can be partitioned into three subgraphs: an encoding of the elements of $A$, an encoding of the elements of $B$, and a decoding of the scalar multiplication results to compute the elements of $C$.  These three subgraphs are connected by edges that correspond to scalar multiplications.  Call the third subgraph $Dec_{\lg n} C$, where $\lg n=\log_2 n$ is the number of levels of recursion for matrices of dimension $n$.

In order to minimize computational costs asymptotically, the running time for Strassen's matrix multiplication must be $O(n^{\omega_0}/P)$. Since a constant fraction of the flops correspond to vertices in $Dec_{\lg n}C$, this is possible only if some processor performs $\Theta\left(\frac{n^{\omega_0}}{P} \right)$ flops corresponding to vertices in $Dec_{\lg n}C$.

By Lemma 10 of \cite{BallardDemmelHoltzSchwartz11b}, the edge expansion of $Dec_kC$ is given by $h(Dec_kC) = \Omega( (4/7)^k )$.
Using Claim 5 there (decomposition into edge disjoint small subgraphs), we deduce that
\begin{equation}
\label{eqn:decC2}
h_s(Dec_{\lg n}C) = \Omega\left(\left(\frac47 \right)^{\log_7 s} \right),
\end{equation}
where $h_s$ is the edge expansion for sets of size at most $s$.

Let $S$ be the set of vertices of $Dec_{\lg n}C$ that correspond to computations performed by the given processor. Set $s=|S|=\Theta\lt(\frac{n^{\omega_0}}{P}\rt)$. By equation~\eqref{eqn:decC2}, the number of edges between $S$ and $\overline{S}$ is 
$$|E(S,\overline{S})|= \Omega \left( s \cdot h_s(Dec_{\lg n}C) \right)=\Omega \left(\frac{n^2}{P^{2/\omega_0}} \right),$$
and because $Dec_{\lg n}C$ is of bounded degree (Fact 9 there) and each vertex is computed by only one processor, the number of words moved is $\Theta(|E(S,\overline S)|)$ and the result follows.
\end{proof}

\subsection{Classical Matrix Multiplication}
\label{sec:classical}

In this section, we prove a memory-independent lower bound for classical matrix multiplication of $\Omega(n^2/P^{2/3})$ words.  The same result appears elsewhere in the literature, under slightly different assumptions: in the LPRAM model \cite{ACS90}, where no data exists in the (unbounded) local memories at the start of the algorithm; in the distributed-memory model \cite{IronyToledoTiskin04}, where the local memory size is assumed to be $M=\Theta(n^2/P^{2/3})$; and in the distributed-memory model \cite{SolomonikDemmel11}, where the algorithm is assumed to perform a certain amount of input replication. Our bound is for the distributed memory model, holds for any $M$, and assumes no specific communication pattern.

Recall the following special case of the Loomis-Whitney geometric bound:
\begin{lemma}
\label{lemma:LW}
\emph{\cite{LoomisWhitney49}} Let $V$ be a finite set of lattice points in ${\bf R}^3$, i.e., points $(x,y,z)$ with integer coordinates. Let $V_x$ be the projection of $V$ in the $x$-direction, i.e., all points $(y,z)$ such that there exists an $x$ so that $(x,y,z) \in V$. Define $V_y$ and $V_z$ similarly. Let $| \cdot |$ denote the cardinality of a set. Then $|V| \leq \sqrt{ |V_x| \cdot |V_y| \cdot |V_z| }$.
\end{lemma}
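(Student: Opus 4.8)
The plan is to prove this discrete form of the Loomis--Whitney inequality directly from the Cauchy--Schwarz inequality applied to indicator functions. First I would write $|V|=\sum_{(x,y,z)}\mathbf{1}_V(x,y,z)$, where the sum ranges over all lattice points and $\mathbf{1}_V$ is the $\{0,1\}$-valued indicator of membership in $V$. The crucial pointwise observation is that whenever $(x,y,z)\in V$, each of its three projections lies in the corresponding projected set, so
\[
\mathbf{1}_V(x,y,z)\leq \mathbf{1}_{V_x}(y,z)^{1/2}\,\mathbf{1}_{V_y}(x,z)^{1/2}\,\mathbf{1}_{V_z}(x,y)^{1/2};
\]
for $\{0,1\}$-valued functions the square roots are harmless, and the product of the three projection indicators dominates $\mathbf{1}_V$ termwise. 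This reduces the problem to bounding $\sum_{x,y,z}\mathbf{1}_{V_x}(y,z)^{1/2}\mathbf{1}_{V_y}(x,z)^{1/2}\mathbf{1}_{V_z}(x,y)^{1/2}$.

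The heart of the argument is then to separate the three variables by two successive applications of Cauchy--Schwarz. I would sum over $x$ first, pulling out the factor $\mathbf{1}_{V_x}(y,z)^{1/2}$ which does not depend on $x$, and apply Cauchy--Schwarz in $x$ to the remaining two factors; this replaces $\sum_x \mathbf{1}_{V_y}(x,z)^{1/2}\mathbf{1}_{V_z}(x,y)^{1/2}$ by $\left(\sum_x \mathbf{1}_{V_y}(x,z)\right)^{1/2}\left(\sum_x \mathbf{1}_{V_z}(x,y)\right)^{1/2}$. After renaming these partial sums $G(z)$ and $H(y)$, a second Cauchy--Schwarz over the pair $(y,z)$ splits $\sum_{y,z}\mathbf{1}_{V_x}(y,z)^{1/2}\,[G(z)H(y)]^{1/2}$ into $\left(\sum_{y,z}\mathbf{1}_{V_x}(y,z)\right)^{1/2}\left(\sum_{y,z}G(z)H(y)\right)^{1/2}$.

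Finally I would identify each separated sum with the cardinality of a projection. The first factor is exactly $|V_x|$; and since $G(z)$ and $H(y)$ depend on disjoint variables, the double sum factors as $\left(\sum_z G(z)\right)\left(\sum_y H(y)\right)=|V_y|\cdot|V_z|$, using that $\sum_{x,z}\mathbf{1}_{V_y}(x,z)=|V_y|$ and $\sum_{x,y}\mathbf{1}_{V_z}(x,y)=|V_z|$. Collecting the exponents gives $|V|\leq |V_x|^{1/2}(|V_y|\cdot|V_z|)^{1/2}$, which is the claim. An equivalent and more geometric route is to slice $V$ along the $x$-axis, bound each planar slice by the trivial two-dimensional estimate (size at most the product of its two one-dimensional shadow sizes), and then apply a single Cauchy--Schwarz across slices; the slice shadows sum to $|V_y|$ and $|V_z|$ while each slice sits inside $V_x$.

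I expect the only real subtlety to be bookkeeping: choosing the order in which to sum and apply Cauchy--Schwarz so that the variables separate cleanly, and verifying that each resulting one- or two-dimensional sum collapses to the size of the correct projection. There is no deep obstacle here --- the inequality is sharp (attained by an axis-aligned box) and the two Cauchy--Schwarz steps are essentially forced once one insists on treating the three coordinates symmetrically.
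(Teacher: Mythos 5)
Your proof is correct, but note that the paper does not prove this lemma at all: it is stated as a quoted result with a citation to Loomis and Whitney (1949), so there is no internal proof to match against. Your argument --- pointwise domination of $\mathbf{1}_V(x,y,z)$ by $\mathbf{1}_{V_x}(y,z)^{1/2}\mathbf{1}_{V_y}(x,z)^{1/2}\mathbf{1}_{V_z}(x,y)^{1/2}$, Cauchy--Schwarz in $x$ with $(y,z)$ fixed, then Cauchy--Schwarz over $(y,z)$, and finally the factorization $\sum_{y,z}G(z)H(y)=\bigl(\sum_z G(z)\bigr)\bigl(\sum_y H(y)\bigr)=|V_y|\cdot|V_z|$ --- is the standard modern proof of the three-dimensional discrete case, and every step checks out; there are no convergence issues since all summands are supported on finite sets. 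The alternative slicing route you sketch is also sound: $|V|=\sum_x |S_x|\le |V_x|^{1/2}\sum_x|S_x|^{1/2}$, the two-dimensional box bound $|S_x|\le|Y_x|\,|Z_x|$, and one Cauchy--Schwarz over $x$ using $\sum_x|Y_x|=|V_z|$ and $\sum_x|Z_x|=|V_y|$. The comparison is therefore one of scope: the cited Loomis--Whitney theorem holds in $n$ dimensions (in the form $|V|^{n-1}\le\prod_i |V_i|$, proved there by an inductive argument, and also in a continuous measure-theoretic setting), whereas your proof buys a short, self-contained, purely elementary derivation of exactly the $3$-dimensional special case the paper needs --- a reasonable trade, since the paper's Lemma~\ref{lemma:classical} uses nothing more.
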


Using Lemma~\ref{lemma:LW} (in a similar way to \cite{BallardDemmelHoltzSchwartz11a,IronyToledoTiskin04}), we can describe the ratio between the number of scalar multiplications a processor performs and the amount of data it must access.

\begin{lemma}
\label{lemma:classical}
Suppose a processor has $I$ words of initial data at the start of an algorithm, performs $\Theta(n^3/P)$ scalar multiplications within classical matrix multiplication, and then stores $O$ words of output data at the end of the algorithm.  Then the processor must send or receive at least $\Omega(n^2/P^{2/3}) - I - O$ words during the execution of the algorithm.
\end{lemma}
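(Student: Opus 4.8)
The plan is to adapt the geometric volume-versus-surface-area argument of \cite{IronyToledoTiskin04,BallardDemmelHoltzSchwartz11a}, using Lemma~\ref{lemma:LW} as the main engine. First I would encode the processor's work geometrically: each scalar multiplication it performs has the form $A(i,k)\cdot B(k,j)$ and contributes to $C(i,j)$, so I identify it with the lattice point $(i,j,k)$ in ${\bf R}^3$. Letting $V$ be the resulting set of points, the hypothesis that the processor performs $\Theta(n^3/P)$ multiplications gives $|V|=\Theta(n^3/P)$.

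Next I would read off the three projections of $V$ as counts of distinct matrix entries the processor touches. Projecting out the $k$-coordinate leaves the pairs $(i,j)$, one for each entry of $C$ the processor contributes to; projecting out $j$ leaves the pairs $(i,k)$, the distinct entries of $A$ it reads; and projecting out $i$ leaves $(k,j)$, the distinct entries of $B$ it reads. Writing $N_A$, $N_B$, $N_C$ for these three projection sizes, Lemma~\ref{lemma:LW} gives $|V|\le\sqrt{N_A N_B N_C}$, and therefore $N_A N_B N_C \ge |V|^2 = \Omega(n^6/P^2)$.

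To pass from this product bound to a bound on the total amount of distinct data touched, I would apply the AM-GM inequality: $N_A+N_B+N_C\ge 3(N_A N_B N_C)^{1/3}\ge 3|V|^{2/3}=\Omega(n^2/P^{2/3})$, the constraint being tightest when the three projections are balanced. This is the step where the exponent $2/3$ is produced.

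Finally I would translate the data-access count into a communication count. The processor must read each of the $N_A+N_B$ input entries: at most $I$ of them can be resident in its initial data, so at least $N_A+N_B-I$ must be received. Likewise it produces a (possibly partial) result for each of the $N_C$ entries of $C$: at most $O$ of these can be retained among its final stored words, so at least $N_C-O$ must be sent. Summing, the processor sends or receives at least $(N_A+N_B+N_C)-I-O=\Omega(n^2/P^{2/3})-I-O$ words. I expect this last accounting to be the part requiring the most care, since classical matrix multiplication may split each $C(i,j)$ across processors as partial sums, so one must argue that every touched entry not already supplied by the initial data or kept in the final output genuinely crosses the processor boundary.
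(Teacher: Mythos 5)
Your proposal is correct and follows essentially the same route as the paper's proof: the same lattice-point encoding of scalar multiplications, Loomis--Whitney (Lemma~\ref{lemma:LW}) applied to the three projections, a product-to-sum inequality (you use AM-GM where the paper uses $\sqrt{|V_x||V_y||V_z|}\leq\sqrt{\frac16(|V_x|+|V_y|+|V_z|)^3}$, the same step up to constants), and the final subtraction of $I$ and $O$. Your closing remark about partial sums of $C(i,j)$ needing to cross the processor boundary is a point the paper glosses over with ``must communicate if they are not in its local memory,'' so your accounting is if anything slightly more careful.
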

\begin{proof}
We follow the proofs in \cite{BallardDemmelHoltzSchwartz11a,IronyToledoTiskin04}. Consider a discrete $n\times n\times n$ cube where the lattice points correspond to the scalar multiplications within the matrix multiplication $A\cdot B$ (\emph{i.e.}, lattice point $(i,j,k)$ corresponds to the scalar multiplication $a_{ik}\cdot b_{kj}$).  Then the three pairs of faces of the cube correspond to the two input and one output matrices.

The projections on the three faces correspond to the input/output elements the processor has to access (and must communicate if they are not in its local memory).  By Lemma~\ref{lemma:LW}, and the fact that $\sqrt{ |V_x| \cdot |V_y| \cdot |V_z| }\leq \sqrt{ \frac16 (|V_x| + |V_y| + |V_z|)^3}$, the number of words the processor must access is at least $\sqrt[3]{6} \; |V|^{2/3} = \Omega(n^2/P^{2/3})$.  Since the processor starts with $I$ words and ends with $O$ words, the result follows.
\end{proof}

\begin{theorem}
\label{thm:classical}
Suppose a parallel algorithm performing classical dense matrix multiplication begins with one copy of the input matrices and minimizes computational costs in an asymptotic sense.   Then, for sufficiently large $P$,\footnote{The theorem applies to any \(P\geq 2\) with a strict enough assumption on the load balance.} some processor must send or receive at least $\Omega\lt(\frac{n^2}{P^{2/3}}\rt)$.
\end{theorem}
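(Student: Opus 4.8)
The plan is to reduce Theorem~\ref{thm:classical} to the per-processor tradeoff already established in Lemma~\ref{lemma:classical} by identifying a single processor that simultaneously performs a $\Theta(n^3/P)$ share of the multiplications and holds only a modest amount of initial and final data. First I would invoke the assumption that the algorithm minimizes computational costs asymptotically: the total number of scalar multiplications in classical matrix multiplication is $n^3$, so achieving running time $O(n^3/P)$ forces some processor to perform $\Theta(n^3/P)$ of them. Fix this processor and let $I$ and $O$ denote the words of input it starts with and output it ends with, respectively. By Lemma~\ref{lemma:classical}, this processor must send or receive at least $\Omega(n^2/P^{2/3}) - I - O$ words, so it suffices to show that $I$ and $O$ are each $o(n^2/P^{2/3})$, whence the $\Omega(n^2/P^{2/3})$ term dominates and the theorem follows.

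The heart of the argument is therefore bounding $I$ and $O$. For the initial data, I would use the hypothesis that the algorithm begins with exactly one copy of the input matrices: the $2n^2$ input words are distributed among the $P$ processors with no replication, so the average processor starts with $2n^2/P$ words. Since $2n^2/P = o(n^2/P^{2/3})$, a processor holding only a typical (load-balanced) share of the input contributes a negligible $I$. The analogous statement for $O$ follows because the output matrix $C$ has $n^2$ entries, and minimizing communication/memory costs means the output is not wildly replicated either, giving $O = O(n^2/P) = o(n^2/P^{2/3})$ for a load-balanced processor.

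The main obstacle is the interaction between two constraints that need not be satisfied by the same processor: the processor performing $\Theta(n^3/P)$ multiplications is selected by the computational-load argument, but the bounds on $I$ and $O$ are only guaranteed on average over all processors. A single adversarially chosen processor could in principle do its fair share of flops while hoarding an atypically large fraction of the input or output. This is precisely why the theorem is stated only for ``sufficiently large $P$'' with the footnoted caveat about a load-balance assumption: the clean argument requires that the heavily-computing processor also hold $O(n^2/P)$ words of input and output, which holds under a balanced distribution or for $P$ large enough that the $P^{1/3}$ gap between $n^2/P$ and $n^2/P^{2/3}$ absorbs any constant-factor imbalance. I would formalize this by arguing that if no such well-behaved heavy processor existed, then either the flops or the data would be too imbalanced to minimize the relevant costs, contradicting the hypotheses; for large $P$ the slack $P^{1/3}$ between the two scales makes this robust to constant-factor perturbations.

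Once $I, O = o(n^2/P^{2/3})$ is secured for the chosen processor, the conclusion is immediate: Lemma~\ref{lemma:classical} gives a communication lower bound of $\Omega(n^2/P^{2/3}) - I - O = \Omega(n^2/P^{2/3})$ words for that processor, which is exactly the claimed bound.
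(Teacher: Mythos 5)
Your reduction to Lemma~\ref{lemma:classical} is the right first move, and you correctly spot the crux: the processor guaranteed to perform $\Theta(n^3/P)$ multiplications need not be a processor holding little data. But your proposed repair does not close this gap. First, nothing in the theorem's hypotheses is contradicted by a heavily-computing processor that hoards data: the only assumptions are one copy of the input at the start and asymptotically minimal \emph{computational} cost, so a processor that holds all of $A$ (that is, $I = \Theta(n^2)$, a factor of $P$ above average, not a constant factor) while doing its $\Theta(n^3/P)$ share of flops violates nothing, and for that processor the bound $\Omega(n^2/P^{2/3}) - I - O$ from Lemma~\ref{lemma:classical} is vacuous; no choice of ``sufficiently large $P$'' rescues this, since $\Theta(n^2)$ exceeds $n^2/P^{2/3}$ for every $P$. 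Your contradiction argument (``the data would be too imbalanced to minimize the relevant costs'') is therefore circular --- data balance is not among the costs the hypotheses say are minimized. Second, your bound on $O$ appeals to the algorithm ``minimizing communication/memory costs,'' which is not a hypothesis of this theorem (bandwidth minimization appears only in Corollary~\ref{cor:strong-cubic}); the theorem places no restriction on output replication or redundant computation, so these must be handled, not assumed away.

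The paper closes the gap with two ideas absent from your proposal. (1) A designation argument: for each of the $n^2$ output elements, fix one memory location as \emph{the} designated output and keep only the $n$ scalar multiplications feeding it, disregarding all other copies and redundant computations; this caps the total relevant data (one copy of the input plus the designated outputs) at $O(n^2)$ words no matter how much replication the algorithm performs. (2) A pigeonhole over many heavy processors rather than one: running time $O(n^3/P)$ forces not merely some processor but a \emph{constant fraction} of the processors, i.e.\ $\Omega(P)$ of them, to each perform $\Theta(n^3/P)$ designated multiplications (each processor does at most $O(n^3/P)$ flops by the time bound, and the $n^3$ designated multiplications must be covered). Averaging the $O(n^2)$ words of relevant data over these $\Omega(P)$ heavy processors, at least one of them starts and ends with $I + O = O(n^2/P)$ words, and applying Lemma~\ref{lemma:classical} to \emph{that} processor gives $\Omega(n^2/P^{2/3}) - O(n^2/P) = \Omega(n^2/P^{2/3})$. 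In short: instead of fixing one heavy processor and hoping it is data-light, exhibit many heavy processors and argue that they cannot all be data-heavy.
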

\begin{proof}
At the end of the algorithm, every element of the output matrix must be fully computed and exist in some processor's local memory (though multiples copies of the element may exist in multiple memories).  For each output element, we designate one memory location as the output and disregard all other copies.  For each of the $n^2$ designated memory locations, we consider the $n$ scalar multiplications whose results were used to compute its value and disregard all other redundantly computed scalar multiplications.

In order to minimize computational costs asymptotically, the running time for classical dense matrix multiplication must be $O(n^3/P)$.  This is possible only if at least a constant fraction of the processors perform $\Theta\left(\frac{n^3}{P} \right)$ of the scalar multiplications corresponding to designated outputs.

Since there exists only one copy of the input matrices and designated output--$O(n^2)$ words of data--some processor which performs $\Theta(n^3/P)$ multiplications must start and end with no more than $I+O=O(n^2/P)$ words of data.  Thus, by Lemma~\ref{lemma:classical}, some processor must read or write $\Omega(n^2/P^{2/3})-O(n^2/P)=\Omega(n^2/P^{2/3})$ words of data.
\end{proof}

\section{Limits of Strong Scaling}
\label{sec:strongscaling}

In this section we present limits of strong scaling of matrix multiplication algorithms.  
These are immediate implications of the memory independent communication lower bounds proved in Section~\ref{sec:LB}.
Roughly speaking, the memory-dependent communication-cost lower-bound is of the form
$\Omega\lt( f(n,M)/P\rt)$ for both classical and Strassen matrix
multiplication algorithms. However, the memory independent lower bounds
are of the form $\Omega\lt(f(n,M)/P^c\rt)$ where $c<1$ 
(see Table~\ref{tbl:summary}).
This implies that strong scaling is not possible when the memory-independent bound dominates. We make this formal below.

\begin{table}[t]
\centering
\small
\begin{tabular}{|c||c||c|} \hline
 & {Classical} & {Strassen} \\ \hline
Memory-dependent &
\multirow{2}{*}{$\Omega \lt(\frac{n^3}{P\sqrt{M}} \rt)$} &
\multirow{2}{*}{$\Omega \lt(\frac{n^{\omega_0}}{PM^{\omega_0/2-1}}\rt)$} \\
lower bound & & \\
\hline
Memory-independent &
\multirow{2}{*}{$\Omega \lt( \frac{n^2}{P^{2/3}} \rt)$} &
\multirow{2}{*}{$\Omega \lt( \frac{n^2}{P^{2/\omega_0}} \rt)$} \\
lower bound & & \\
\hline
Perfect strong &
\multirow{2}{*}{$P = O\left(\frac{n^3}{M^{3/2}} \right)$} &
\multirow{2}{*}{$P = O\left(\frac{n^{\omega_0}}{M^{\omega_0/2}}\right)$} \\
scaling range & & \\
\hline\hline
Attaining algorithm &\cite{SolomonikDemmel11} &\cite{BallardDemmelHoltzLipshitzSchwartz12a} \\
\hline
\end{tabular}
\caption{Bandwidth-cost lower bounds for matrix multiplication and perfect strong scaling ranges.  The classical memory dependent bound is due to \cite{IronyToledoTiskin04}, and the Strassen memory dependent bound is due to \cite{BallardDemmelHoltzSchwartz11b}.  The memory-independent bounds are proved here, though variants of the classical bound appear in \cite{ACS90,IronyToledoTiskin04,SolomonikDemmel11}.}
\label{tbl:summary}
\end{table}

\begin{corollary}\label{cor:strong-strassen}
Suppose a parallel algorithm performing Strassen's matrix multiplication minimizes bandwidth and computational costs in an asymptotic sense and performs no redundant computation.  Then the algorithm can achieve perfect strong scaling only for $P = O\lt(\frac{n^{\omega_0}}{M^{\omega_0/2}}\rt)$.
\end{corollary}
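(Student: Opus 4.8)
The plan is to combine the memory-independent bandwidth lower bound of Theorem~\ref{thm:Strassen} with the memory-dependent bound from Table~\ref{tbl:summary} to pin down exactly where perfect strong scaling must break down. Recall that perfect strong scaling means the total running time—including communication—is linear in $1/P$. Since the bandwidth cost is one component of the running time, a necessary condition for perfect strong scaling is that the per-processor bandwidth cost itself scales as $\Theta(1/P)$ as $P$ increases (with $n$ and $M$ fixed). The strategy is therefore to show that the lower bound on bandwidth cost fails to decrease like $1/P$ precisely once $P$ exceeds $n^{\omega_0}/M^{\omega_0/2}$.

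\noindent\textbf{Key steps.} First I would observe that the known attaining algorithm of \cite{BallardDemmelHoltzLipshitzSchwartz12a} does achieve perfect strong scaling throughout the claimed range $P = O(n^{\omega_0}/M^{\omega_0/2})$, so the content of the corollary is the matching \emph{upper} limit on $P$: beyond this range, no algorithm (satisfying the hypotheses) can strongly scale perfectly. Second, I would apply Theorem~\ref{thm:Strassen}, which gives that some processor sends or receives $\Omega(n^2/P^{2/\omega_0})$ words under exactly the hypotheses assumed here (minimizing computational cost, no redundant computation). Third, I would compare this memory-independent bound against the memory-dependent bound $\Omega\lt(\frac{n^{\omega_0}}{PM^{\omega_0/2-1}}\rt)$: the running time, being bounded below by the maximum of these two communication costs (plus the $\Theta(n^{\omega_0}/P)$ arithmetic cost), cannot drop faster than the larger bound. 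The crossover point is found by setting the two expressions equal, $\frac{n^2}{P^{2/\omega_0}} = \frac{n^{\omega_0}}{P M^{\omega_0/2-1}}$, and solving for $P$, which yields precisely $P = \Theta(n^{\omega_0}/M^{\omega_0/2})$.

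\noindent\textbf{Finishing the argument.} Once $P$ exceeds this threshold, the memory-independent bound $\Omega(n^2/P^{2/\omega_0})$ dominates. Since $2/\omega_0 < 1$, this quantity decreases strictly more slowly than $1/P$: doubling $P$ reduces the bandwidth lower bound by a factor of only $2^{2/\omega_0} < 2$, whereas perfect strong scaling would demand a factor of $2$. Hence the bandwidth cost—and therefore the total running time—cannot be linear in $1/P$ in this regime, which establishes that perfect strong scaling is impossible for $P = \omega(n^{\omega_0}/M^{\omega_0/2})$ and gives the stated range as the upper limit.

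\noindent\textbf{The main obstacle} I anticipate is making the notion of ``perfect strong scaling'' precise enough to turn the two static lower bounds into a statement about the \emph{scaling rate} of the running time as a function of $P$. The two communication bounds in Table~\ref{tbl:summary} are each of the form $\Omega(f(n,M)/P^c)$, and the crux is the elementary but essential point that an $\Omega(P^{-c})$ lower bound with $c<1$ is incompatible with a $\Theta(P^{-1})$ running time; care is needed to ensure that this comparison is being made at fixed $n,M$ while $P$ varies, which is the correct setting for strong scaling. The remaining computation of the crossover point is routine.
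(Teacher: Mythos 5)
Your proposal is correct and follows essentially the same route as the paper's own proof: invoke the memory-dependent bound $\Omega\lt(\frac{n^{\omega_0}}{PM^{\omega_0/2-1}}\rt)$ of \cite{BallardDemmelHoltzSchwartz11b} alongside the memory-independent bound of Theorem~\ref{thm:Strassen}, locate the crossover at $P = \Theta\lt(\frac{n^{\omega_0}}{M^{\omega_0/2}}\rt)$, and conclude that beyond it the communication cost scales as $1/P^{2/\omega_0}$ rather than $1/P$. Your explicit remark that $2/\omega_0 < 1$ makes an $\Omega(P^{-2/\omega_0})$ bound incompatible with $\Theta(1/P)$ scaling is exactly the (implicit) dominance argument in the paper, just spelled out more carefully.
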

\begin{proof}
By \cite{BallardDemmelHoltzSchwartz11b}, any parallel algorithm performing matrix multiplication based on Strassen moves at least $\Omega\lt(\frac{n^{\omega_0}}{PM^{\omega_0/2-1}}\rt)$ words.  By Theorem~\ref{thm:Strassen}, a parallel algorithm that minimizes computational costs and performs no redundant computation moves at least $\Omega\lt(\frac{n^2}{P^{2/\omega_0}}\rt)$ words.  This latter bound dominates in the case $P = \Omega\lt(\frac{n^{\omega_0}}{M^{\omega_0/2}}\rt)$.  Thus, while a communication-optimal algorithm will strongly scale perfectly up to this threshold, after the threshold the communication cost will scale as $1/P^{2/\omega_0}$ rather than $1/P$.
\end{proof}

\begin{corollary}\label{cor:strong-cubic}
Suppose a parallel algorithm performing classical dense matrix multiplication starts and ends with one copy of the data and minimizes bandwidth and computational costs in an asymptotic sense.  Then the algorithm can achieve perfect strong scaling only for $P = O\lt(\frac{n^3}{M^{3/2}}\rt)$.
\end{corollary}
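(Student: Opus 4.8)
The plan is to mirror the structure of the proof of Corollary~\ref{cor:strong-strassen} exactly, substituting the classical bounds for the Strassen bounds. The key idea is that perfect strong scaling requires the total bandwidth cost to decrease like $1/P$, and this is only sustainable as long as the memory-dependent lower bound (which does scale like $1/P$) dominates the memory-independent lower bound (which scales like $1/P^{2/3}$, strictly slower). The crossover point between the two regimes determines the end of the perfect strong scaling range.

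First I would invoke the memory-dependent lower bound for classical matrix multiplication, namely that any such algorithm moves at least $\Omega\lt(\frac{n^3}{P\sqrt{M}}\rt)$ words; this is the classical bound of \cite{IronyToledoTiskin04} recorded in Table~\ref{tbl:summary}. Next I would invoke Theorem~\ref{thm:classical}, whose hypotheses match those assumed here (one copy of the data, asymptotically minimal computational cost), to obtain the memory-independent lower bound that some processor must move at least $\Omega\lt(\frac{n^2}{P^{2/3}}\rt)$ words. The heart of the argument is then a comparison of these two expressions: the memory-independent bound dominates precisely when $\frac{n^2}{P^{2/3}} = \Omega\lt(\frac{n^3}{P\sqrt{M}}\rt)$, which rearranges to $P = \Omega\lt(\frac{n^3}{M^{3/2}}\rt)$. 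In this regime the dominant bandwidth cost scales as $1/P^{2/3}$ rather than $1/P$, so perfect strong scaling fails; hence perfect strong scaling is possible only for $P = O\lt(\frac{n^3}{M^{3/2}}\rt)$, as claimed.

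There is no serious obstacle here, since all the analytic content has been packaged into Theorem~\ref{thm:classical} and the cited memory-dependent bound; the corollary is a one-line algebraic consequence of comparing two lower bounds. The only point requiring minor care is the direction of the inequality in the crossover computation and confirming that the exponent $2/3 < 1$ indeed forces the memory-independent term to win asymptotically once $P$ exceeds the threshold. I would therefore simply state the two bounds, perform the comparison, and conclude, closing with the observation that below the threshold a communication-optimal algorithm scales perfectly while above it the cost scales as $1/P^{2/3}$.
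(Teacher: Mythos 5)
Your proposal is correct and follows essentially the same route as the paper's own proof: both invoke the memory-dependent bound $\Omega\lt(\frac{n^3}{P\sqrt{M}}\rt)$ of \cite{IronyToledoTiskin04} and the memory-independent bound $\Omega\lt(\frac{n^2}{P^{2/3}}\rt)$ of Theorem~\ref{thm:classical}, then observe that the latter dominates precisely when $P = \Omega\lt(\frac{n^3}{M^{3/2}}\rt)$, forcing the communication cost to scale as $1/P^{2/3}$ rather than $1/P$ beyond that threshold. The crossover algebra you describe matches the paper's argument exactly.
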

\begin{proof}
By \cite{IronyToledoTiskin04}, any parallel algorithm performing matrix multiplication moves at least $\Omega\lt(\frac{n^3}{P\sqrt M}\rt)$ words.  By Theorem~\ref{thm:classical}, a parallel algorithm that starts and ends with one copy of the data and minimizes computational costs moves at least $\Omega\lt(\frac{n^2}{P^{2/3}}\rt)$ words.  This latter bound dominates in the case $P = \Omega\lt(\frac{n^3}{M^{3/2}}\rt)$.  Thus, while a communication-optimal algorithm will strongly scale perfectly up to this threshold, after the threshold the communication cost will scale as $1/P^{2/3}$ rather than $1/P$.
\end{proof}

\includefigs{
\begin{figure}[t]
\begin{center}
\scalebox{.6}{\includegraphics{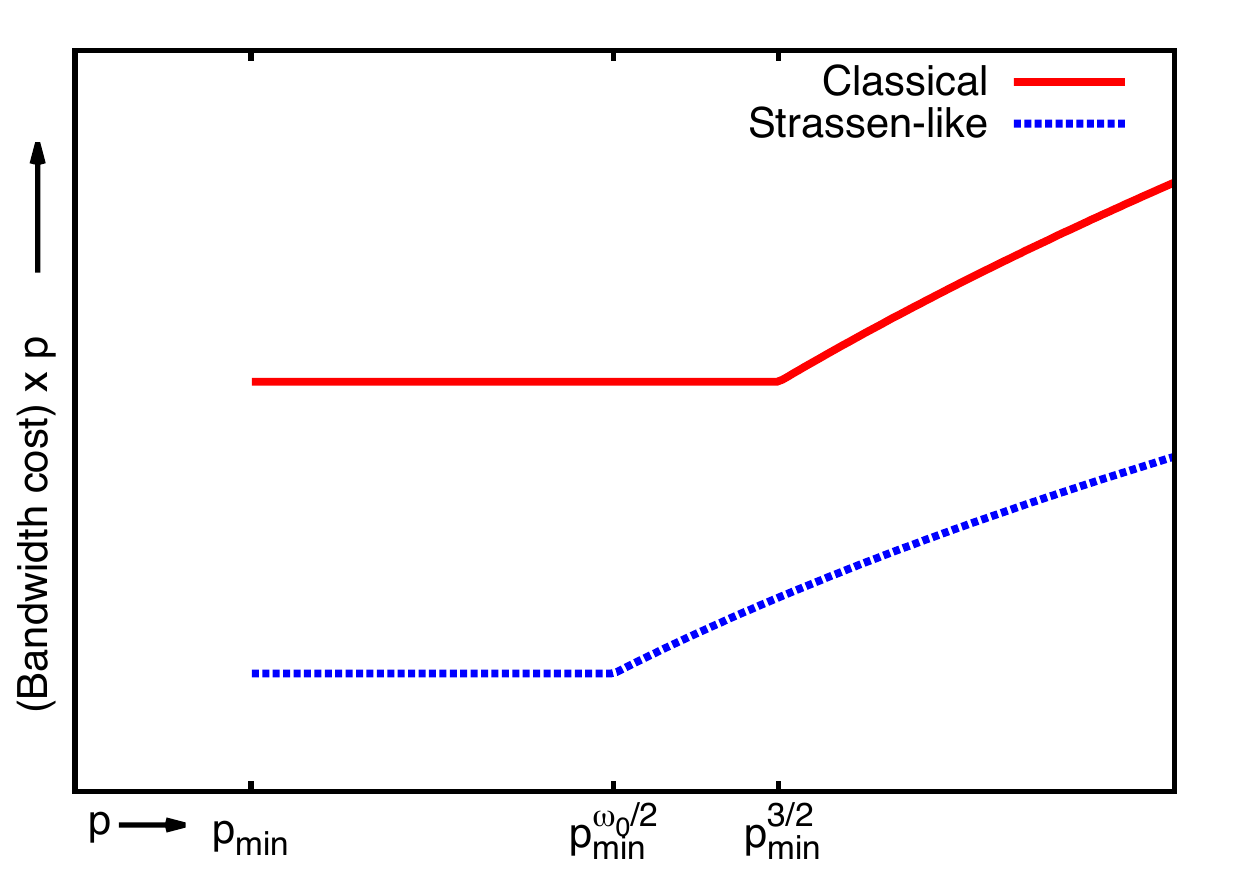}}
\protect\caption{Bandwidth costs and strong scaling of matrix multiplication: classical vs. Strassen-based.  Horizontal lines correspond to perfect strong scaling.  P$_\text{min}$ is the minimum number of processors required to store the input and output matrices.}
\label{fig:scaling}
\end{center}
\end{figure}
}

In Figure~\ref{fig:scaling} we present the asymptotic communication costs of classical and Strassen-based algorithms for a fixed problem size as the number of processors increases.  Both of the perfectly strong scaling algorithms stop scaling perfectly above some number of processors, which depends on the matrix size and the available local memory size.

Let $P_\text{min}=\Theta\lt(\frac{n^2}{M}\rt)$ be the minimum number of processors required to store the input and output matrices.  By Corollaries \ref{cor:strong-strassen} and \ref{cor:strong-cubic} the perfect strong scaling range is $P_\text{min} \leq P \leq P_\text{max}$ where $P_\text{max}=\Theta(P_\text{min}^{3/2})$ in the classical case and $P_\text{max}=\Theta(P_\text{min}^{\omega_0/2})$ in the Strassen case.

Note that the perfect strong scaling range is larger for the classical case, though the communication costs are higher.  

\section{Extensions and Open Problems}
\label{sec:discussion}

The memory-independent bound and perfect strong scaling bound of Strassen's matrix multiplication (Theorem~\ref{thm:Strassen} and Corollary~\ref{cor:strong-strassen}) apply to other Strassen-like algorithms, as defined in \cite{BallardDemmelHoltzSchwartz11a}, with $\omega_0$ being the exponent of the total arithmetic count, provided that $Dec_{\lg n} C$ is connected.   The proof follows that of Theorem \ref{thm:Strassen} and of Corollary \ref{cor:strong-strassen}, but uses Claim 18 of \cite{BallardDemmelHoltzSchwartz11b} instead of Fact 9 there, and replaces Lemma 10 there with its extension.

The memory-dependent bound of classical matrix multiplication of \cite{IronyToledoTiskin04} was generalized in \cite{BallardDemmelHoltzSchwartz11a} to algorithms which perform computations of the form
\begin{equation}
\label{eqn:3-nested-loops}
\text{Mem}(c(i,j)) = f_{ij} ( g_{ijk} (\text{Mem}(a(i,k)),\text{Mem}(b(k,j)))),
\end{equation}
where $\text{Mem}(i)$ denotes the argument in memory location $i$ and $f_{ij}$ and $g_{ijk}$ are functions which depend non-trivially on their arguments (see \cite{BallardDemmelHoltzSchwartz11a} for more detailed definitions).  

The memory-independent bound of classical matrix multiplication (Theorem~\ref{thm:classical}) applies to these other algorithms as well.  If the algorithm begins with one copy of the input data and minimizes computational costs in an asymptotic sense, then, for sufficiently large $P$, some processor must send or receive at least $\Omega \lt( \lt( \frac{G}{P}\rt)^{2/3} -\frac{D}{P} \rt)$ words, where $G$ is the total number of $g_{ijk}$ computations and $D$ is the number of non-zeros in the input and output.  The proof follows that of Lemma \ref{lemma:classical} and Theorem \ref{thm:classical}, setting $|V| = G$ (instead of $n^3$), replacing $n^3/P$ with $G/P$ , and setting $I+O = O(D/P)$ (instead of $O(n^2/P)$).

Algorithms which fit the form of equation \eqref{eqn:3-nested-loops} include LU and Cholesky decompositions, sparse matrix-matrix multiplication, as well as algorithms for solving the all-pairs-shortest-paths problem.  Only a few of these have parallel algorithms which attain the lower bounds in all cases.  In several cases, it seems likely that one can prove better bounds than those presented here, thus obtaining a stricter bound on perfect strong scaling.

We also believe that our bounds can be generalized to QR decomposition and other orthogonal transformations, fast linear algebra, fast Fourier transform, and other recursive algorithms.



\bibliographystyle{acm} 
\bibliography{brief}

\end{document}